\documentclass[twoside,leqno,twocolumn]{article}

\usepackage{amsmath}
\usepackage{hyperref}
\usepackage{cleveref}
\usepackage[letterpaper]{geometry}
\usepackage{ltexpprt}

\usepackage{graphicx}
\usepackage{placeins}

\newcommand{\mysection}{\section}
\newcommand{\mysubsection}{\subsection}

\usepackage{geometry}
\usepackage{subcaption}
% must be last in usepackage list!

\newcommand{\eps}{\varepsilon}
\newcommand{\ourcompactor}{linear compactor} % remember to also rename: "Combining KLL and linear compactors" and "Linear compactors"
\newcommand{\algorithmname}{\ourcompactor{} sketch}
\DeclareMathOperator{\poly}{poly}
\DeclareMathOperator{\E}{E}
\DeclareMathOperator{\err}{err}
\renewcommand{\O}{O}

\usepackage{xcolor}

\newcommand*\samethanks[1][\value{footnote}]{\footnotemark[#1]}
\newcommand{\beginproofwithqed}{\rm \trivlist \item[\hskip \labelsep{\it Proof.\/}]}
% Ridiculous hack to paper over the fast that SIAM doesn't support \qedhere
\newcommand{\myqedhere}{
\qquad\vbox{\hrule height0.6pt\hbox{%
   \vrule height1.3ex width0.6pt\hskip0.8ex
   \vrule width0.6pt}\hrule height0.6pt
  }
}

\begin{document}

\title{Learned Interpolation for Better Streaming Quantile Approximation with Worst-Case Guarantees}
\author{{Nicholas Schiefer}\thanks{Anthropic. Work done while at MIT.}
\and {Justin Y. Chen}\thanks{Massachusetts Institute of Technology}
\and {Piotr Indyk}\samethanks[2]
\and {Shyam Narayanan}\samethanks[2]
\and {Sandeep Silwal}\samethanks[2]
\and {Tal Wagner}\thanks{Amazon AWS. Work done while at Microsoft Research.}}

\date{}

\maketitle

\fancyfoot[R]{\scriptsize{Copyright \textcopyright\ 2023\\
Copyright for this paper is retained by authors}}

\begin{abstract} \small\baselineskip=9pt % required by SIAM
An $\eps$-approximate quantile sketch over a stream of $n$ inputs approximates the rank of any query point~$q$---that is, the number of input points less than $q$---up to an additive error of $\eps n$, generally with some probability of at least $1 - 1/\poly(n)$, while consuming $o(n)$ space.
While the celebrated KLL sketch of Karnin, Lang, and Liberty achieves a provably optimal quantile approximation algorithm over worst-case streams, the approximations it achieves in practice are often far from optimal.
Indeed, the most commonly used technique in practice is Dunning's t-digest, which often achieves much better approximations than KLL on real-world data but is known to have arbitrarily large errors in the worst case.
We apply interpolation techniques to the streaming quantiles problem to attempt to achieve better approximations on real-world data sets than KLL while maintaining similar guarantees in the worst case.	
\end{abstract}

\mysection{Introduction}

The quantile approximation problem is one of the most fundamental problems in the streaming computational model, and also one of the most important streaming problems in practice. 
Given a set of items~$x_1, x_2, \dots, x_n$ and a query point~$q$, the \emph{rank} of $q$, denoted $R(q)$, is the number of items in $\{x_i\}_{i = 1}^n$ such that $x_i \leq q$.
An $\eps$-approximate quantile sketch is a data structure that, given access to a single pass over the stream elements, can approximate the rank of all query points simultaneously with additive error at most $\eps n$.

Given its central importance, the streaming quantiles problem has been studied extensively by both theoreticians and practitioners.
Early work by Manku, Rajagopalan, and Lindsay \cite{manku} gave a randomized solution that used $\O((1/\eps) \log^2 (n \eps))$ space; their technique can also be straightforwardly adapted to a deterministic solution that achieves the same bound \cite{wang2013}.
Later, Greenwald and Khanna \cite{greenwald} developed a deterministic algorithm that requires only $\O((1/\eps) \log(n\eps))$ space.
More recently, Karnin, Lang, and Liberty~(KLL)~\cite{kll} developed the randomized KLL sketch that succeeds at all points with probability $1 - \delta$ and uses $\O((1/\eps) \log \log (1/\delta))$ space and gave a matching lower bound.

Meanwhile, streaming quantile estimation is of significant interest to practitioners in databases, computer systems, and data science who have studied the problem as well. Most notably, Dunning \cite{dunning} introduced the celebrated t-digest, a heuristic quantile estimation technique based on 1-dimensional $k$-means clustering that has seen adoption in numerous systems, including Influx, Apache Arrow, and Apache Spark.
Although t-digest achieves remarkable accuracy on many real-world data sets, it is known to have arbitrarily bad error in the worst case \cite{cormode}.

To illustrate this core tradeoff, Figure~\ref{fig:comparison} shows the rank function of the \texttt{books} dataset from the SOSD benchmark \cite{kipf, marcus}, along with KLL and t-digest approximations that use the same amount of space when the data set is randomly shuffled, and when the same data set is streamed in an adversarial order that we found to induce especially bad performance in t-digest.

\begin{figure*}[t]
\centering
\includegraphics[width=0.9\textwidth]{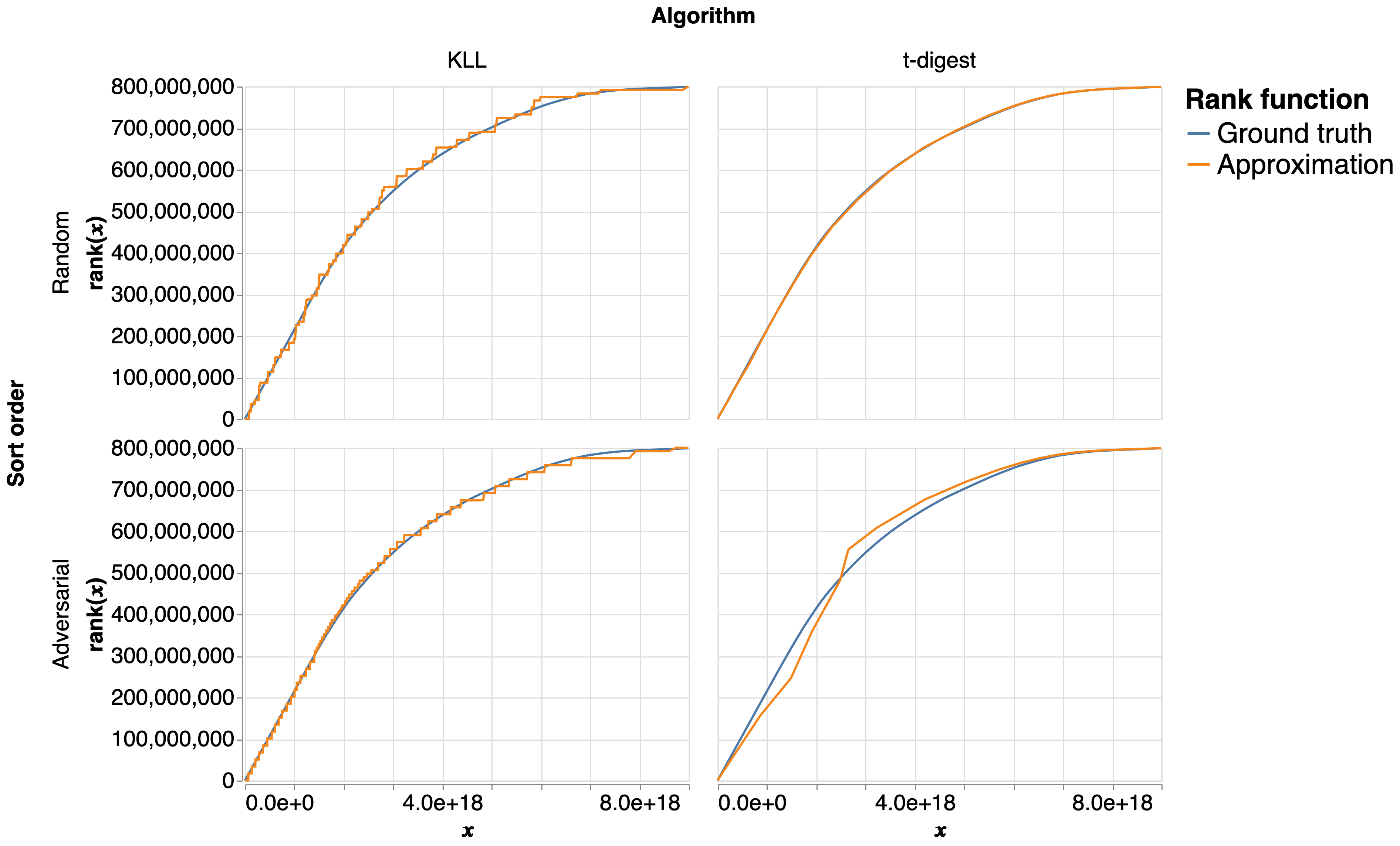}
\caption{Ground-truth and approximate rank functions for the SOSD \texttt{books} data set, with approximation by both the KLL and t-digest sketches. For the approximations, the data were presented in both randomly shuffled (top) and adversarial (bottom) order. In the adversarial case that we discovered, t-digest does much worse than KLL, demonstrating the value of worst-case correctness.}
\label{fig:comparison}

\end{figure*}

Recent advances in machine learning have led to the development of \emph{learning-augmented algorithms} which seek to improve solutions to classical algorithms problems by exploiting empirical properties of the input distribution \cite{mitzenmacher}.
Typically, a learning-augmented algorithm retains worst-case guarantees similar to those of classical algorithms while performing better on nicely structured inputs that appear in practical applications.
We might hope that a similar technique could be used for quantile estimation.

In fact, one of the seminal results in the field studied the related problem of \emph{learned index structures}. An index is a data structure that maps a query point to its rank.
Several model families have been tried for this learning problem, including neural networks and the successful recursive model index (RMI) that define a piecewise-linear approximation \cite{kraska}. 

Although learned indexes aim to answer rank queries, they do not solve the streaming quantiles estimation problem because they do not operate on the data in a stream. For example, training a neural network or fitting an RMI model require $\O(n)$ of the elements in the stream to be present in memory simultaneously, or require multiple passes over the stream.

\mysubsection{Our contributions.}

We present an algorithm for the streaming quantiles problem that achieves much lower error on real-world data sets than the KLL sketch while retaining similar worst case guarantees.
This algorithm, which we call the \algorithmname{}, uses \emph{linear interpolation} in place of parts of the KLL sketch.
Intuitively, this linear interpolation provides a better approximation to the true cumulative density function when that function is relatively smooth, a common property of CDFs of many real world datasets.

On the theoretical side, we prove that the \algorithmname{} achieves similar worst case error to the KLL sketch.
That is, the \algorithmname{} computes an $\eps$-approximation for the rank of a single item with probability $1 - \delta$ and space $O((1/\eps) \log^2 \log(1/\delta))$.
This is within a factor that is poly-log-logarithmic (in $1/\delta$) of the known lower bounds and the (rather complex) version of the KLL sketch that matches it \cite{kll}.
Our proof is a relatively straightforward modification of the analysis of the original KLL sketch, due to the general similarity of the algorithms.
In fact, we can view our algorithm as exploiting a place in the KLL sketch analysis that left some ``slack'' in the algorithm design.

In our experiments, we demonstrate that the \algorithmname{} achieves significantly lower error than the KLL sketch on a variety of benchmark data sets from the SOSD benchmark library \cite{kipf, marcus} and for a wide variety of input orders that induce bad behaviour in other algorithms like t-digest. 
In many cases, the \algorithmname{} achieves a space-error tradeoff that is competitive with t-digest, while also retaining worst-case guarantees.

\mysection{Understanding the KLL sketch}  \label{sec:kll}

The complete KLL sketch that achieves optimal space complexity is complex: it involves several different data structures, including a Greenwald-Khanna (GK) sketch that replaces the top $O(\log \log (1/\delta))$ compactors.
Here, we present a simpler version of the KLL sketch that uses $O((1/\eps) \log^2 \log (1/\delta))$ space---just a factor of $O(\log \log (1/\delta))$ away from optimal---and is commonly implemented in practice \cite{ivkin}, presented in Theorem 4 of~\cite{kll}.
In the remainder of this paper, we refer to this sketch as the \emph{non-GK KLL sketch}.

\mysubsection{The non-GK KLL sketch.}

The basic KLL sketch is composed of a hierarchy of \emph{compactors}.
Each of the $H$ compactors has a capacity~$k$, which defines the number of items that it can store.
Each item is also associated with a (possibly implicit) \emph{weight} which represents the number of points from the input stream that it represents in the sketch.
All points in the same compactor have the same weight.

When a compactor reaches its capacity, it is compacted.
A compaction begins by sorting the items.
Then, either the even or odd elements in the compactor are chosen, and the unchosen items are discarded.
The choice to discard the even or odd items is made with equal probability.
The chosen items are then placed into the next compactor in the hierarchy and the points are all assigned a weight twice what they began with.
This general setup is common to many streaming quantiles sketches \cite{manku, kll}.

To predict the rank of a query point~$q$, we return the sum of the weights of all points, in all compactors, that are at most $q$.

A key contribution of the KLL sketch is to use different capacities for different compactors.
We say that the first compactor where points arrive from the stream has a height of 0, and each successive compactor has a height one higher than the compactor below it, so that the top compactor has height~$H - 1$.
In KLL, the compactor at height $h$ has capacity $k_h = \max(k c^{H - h}, 2)$, where $k$ is a space parameter that defines the capacity of the highest compactor and $c$ is a scale parameter that is generally set as $c = 2/3$.

\mysubsection{Analysis of the non-GK KLL sketch.}

Here, we give a somewhat simplified---to focus on the essential details---version of the analysis of the non-GK KLL sketch.
Consider the non-GK KLL sketch described above that terminates with $H$ different compactors.
The weight of the items at height $h$ is $w_h = 2^h$.
Let $m_h$ be the number of compaction operations in the compactor at height~$h$.

Consider a single compaction operation in the compactor at height~$h$ and a point $x$ in that compactor at that time.
If $x$ was one of the even elements in the compactor, the total weight to the left of it, which defines its rank, is unchanged by the compaction.
If $x$ is one of the odd elements in the compactor, the total weight either increases by $w_h$ (if the odd items are chosen) or decreases by $w_h$ (if the even items are chosen).
For the $i$th compaction operation at level $h$, let $X_{i, h}$ be $1$ if the odd items were chosen and $-1$ if the even items were chosen.
Observe that $\E[X_{i, h}] = 0$ and $|X_{i, h}| \leq 1$.
Then the total error introduced by all compactions at level $h$ is
$
\sum_{i = 1}^{m_h} w_h X_{i, h}
$.
Consider any point $x$ in the stream. The error in $R(x)$ introduced by compaction at all levels up to a fixed level~$H'$ is therefore
$
\sum_{i = 0}^{H' - 1} \sum_{i = 1}^{m_h} w_h X_{i, h}
$.

Applying a two-tailed Hoeffding bound to this error, we obtain that
\begin{align*}
& \Pr[\text{error is $> \eps n$}] \\
&= \Pr\left[\left|\sum_{i = 0}^{H' - 1} \sum_{i = 1}^{m_h} w_h X_{i, h}\right| > \eps n \right] \\
&\leq
2 \exp \left( - \frac{\eps^2 n^2}{2 \sum_{i = 0}^{H' - 1} \sum_{i = 1}^{m_h} w_h^2} \right).
\end{align*}

This addresses the error introduced by all layers up to $H'$.
Notice that if we set $H' = H$, then the error bound is dominated by the weight terms from the highest compactors.
To get around this, the non-GK KLL sketch sets the capacity of the final $s = \O(\log \log (1/\delta))$ compactors to a fixed constant~$k$ and analyzes them separately: it is assumed to contribute its worst possible error of $w_h$ for reach compaction.
This is the key lemma in the KLL analysis and the point of departure for the \algorithmname{}.

\mysection{The \algorithmname} \label{sec:algorithm}

We propose a streaming quantile approximation algorithm that combines our empirical and theoretical observations about how KLL might be improved.
We leave the basic architecture of the non-GK KLL sketch unchanged.
Like the optimal KLL sketch, which replaces the top $O(\log \log (1/\delta))$ compactors with a Greenwald-Khanna sketch, we replace some of these top compactors with another data structure.
In our case, we replace the top $t = O(1)$ compactors with a structure that we call a \emph{\ourcompactor{}}.

\paragraph{Linear compactors.}
A \emph{\ourcompactor{}} is a sorted list of elements, each of which is a pair of an item from the stream and a weight.
As in KLL, the weight represents the number of stream items that the item represents; unlike in KLL, this weight varies between elements in the list and may be an arbitrary floating point number, rather than a power of two.
Like a KLL compactor, a \ourcompactor{} has a capacity which we fix to $tk$, the total capacity of the (fixed-size) compactors it replaces.
When that capacity is exceeded, it undergoes \emph{compaction} and only half of its elements are retained.

A KLL compactor $C_h$ at height $h$ implicitly represents a piecewise-constant function~$f$: specifically,
\[
f(q) = \sum_{x \in C_h : x \leq q} w_h.
\]
This function is the contribution of this compactor to the approximated rank of a query point~$q$.
A \ourcompactor{} implicitly represents a \emph{piecewise-linear} function which also contributes to the rank of $q$.
Given a \ourcompactor{} $L = \{(y_1, w_1), (y_2, w_2), \dots, (y_k, w_k)\}$ with $y_1 \leq y_2 \leq \cdots \leq y_k$, the contribution of $L$ to the the rank of $q$ is
\begin{equation}
\label{eq:rank}
f_L(q) = \underbrace{\sum_{i = 1}^{i^\ast - 1} w_i}_{\text{KLL-style term}} +\quad \underbrace{w_{i^\ast} \frac{q - y_{i^\ast - 1}}{y_{i^\ast} - y_{i^\ast - 1}}}_{\text{interpolation term}}
\end{equation}
where $i^\ast$ is the smallest index such that $y_{i^\ast} > q$.
In effect, we spread the weight of $y_{i^\ast}$ over the entire interval between $y_{i^\ast - 1}$ and $y_{i^\ast}$, with uniform density, rather than treating it as a point mass at $y_{i^\ast}$ exactly.
The resulting contribution $f_L(q)$ is a monotone, piecewise-linear function, as desired.

\paragraph{Adding points to a \ourcompactor{}.}

Our \ourcompactor{} receives points from the last of the KLL-style compactors, each with a fixed weight of $w_{H - t - 1}$.
These points and weights cannot be merged by merely concatenating the arrays.
To see this, consider adding a single point~$b$ with unit weight to a compactor with two points~$a$ and $c$ with $a < b < c$, and where $c$ has weight $w$.
The weight of $c$ after the compaction should not be $w$ since the weight of $c$ before the addition should be spread uniformly over the entire interval $[a, c]$.

Instead, we add a set of new points~$y_1 < y_2 < \cdots < y_m$ to an existing set of points~$x_1 < x_2 < \cdots < x_n$ by merging the two lists of points into one list and sorting them into the list~$z_1 < z_2 < \cdots < z_{m + n}$.
Next, we set $w(z_1)$ equal to the weight of $z_1$ in the original list and compute the new weights recursively.
Assuming that $z_i = x_i$ without loss of generality, we set
\[
w(z_i) = w(x_i) \frac{x_i - z_{i - 1}}{x_i - x_{i -1 }} + w(y_\ast) \frac{x_i - z_{i - 1}}{y_\ast - y_{\ast - 1}}
\]
where $y_\ast$ is the first $y_i$ such that $y_i > z_i$.

Equivalently, we convert each of the weight functions into a rank function using Equation~\ref{eq:rank}, sum those, and then compute the finite differences to obtain the final weight function.

\paragraph{Compacting a \ourcompactor{}.}
 
Lastly, we describe the process for compacting a \ourcompactor{}.
Given a parameter $\alpha \in [0, 1]$ and a \ourcompactor{}~$C$ containing $n$~points, we wish to obtain a new \ourcompactor{}~$C'$ with $\alpha n$~points with the following properties:
\begin{itemize}
	\item The points in $C'$ are subset of the points in $C$.
	\item The total weight of the points in both compactors is the same, so that $\sum_{x \in C} w(x) = \sum_{x \in C'} w(x')$.
	\item For every point~$x \in C'$, the rank $f_C(x) = f_{C'}(x)$.
	\item The ``error" introduced by the compaction is as small as possible. That is, for some loss function~$L$, we would like $\sum_{x \in C} L(f_{C'}(x), f_C(x))$ to be as small as possible.
\end{itemize}

In this paper, we use $\alpha = 1/2$, although in principle other values could be used.

It is important that this procedure can be completed efficiently.
In our experiments, we primarily use supremum ($\ell_\infty$) loss~$L(x, x') = \sup_x |x - x'|$.
This can be minimized using a dynamic programming technique introduced by \cite{jagadish}.

\mysection{Analysis}

We give a worst-case analysis of our algorithm that matches the worst-case analysis for the version of the non-GK KLL sketch:

\begin{theorem} \label{thm:main}
The \algorithmname{} described in \Cref{sec:algorithm} computes an $\eps$-approximation for the rank of a single item with probability $1 - \delta$ with space complexity $\O((1/\eps) \log^2 \log (1/\delta))$.
\end{theorem}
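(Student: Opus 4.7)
The plan is to inherit the non-GK KLL analysis for the unchanged portion of the sketch and to treat only the linear compactor separately. Since the \algorithmname{} differs from the non-GK KLL sketch only in that its top $t = \O(1)$ fixed-capacity compactors are replaced by a single linear compactor of capacity $tk$, the error at any query point~$q$ decomposes as a sum of contributions from (i) the KLL-style compactors at heights $0, 1, \ldots, H - t - 1$, and (ii) the single linear compactor. I would bound each contribution by $(\eps/2) n$ and then take a union bound.

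For (i), the bottom $H - t$ compactors are identical to the corresponding compactors of the non-GK KLL sketch, so the analysis sketched in \Cref{sec:kll} applies without modification: the two-tailed Hoeffding bound on the sum $\sum_{h, i} w_h X_{i, h}$ controls the error of the geometrically-shrinking bottom compactors with failure probability at most $\delta$, and the worst-case deterministic contribution of the fixed-capacity compactors at heights $H - s$ through $H - t - 1$ (with $s = \O(\log \log(1/\delta))$) is absorbed by choosing the space parameter $k = \Theta((1/\eps) \log \log(1/\delta))$. Since only the top $t = \O(1)$ compactors are removed from this stack, the asymptotic space bound of $\O((1/\eps) \log^2 \log(1/\delta))$ is preserved.

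For (ii), I would prove a deterministic per-compaction bound on the $\ell_\infty$ rank-error introduced by one compaction of the linear compactor, via a comparison argument: the DP-based compaction is optimal in $\ell_\infty$ loss over all size-$(tk/2)$ subsets of the $tk$ input points, so it is in particular at least as good as the ``KLL-style'' strategy of retaining every other point, whose $\ell_\infty$ error is $\O(w_{\max})$, where $w_{\max}$ is the largest weight in the compactor at the time. Summing over the (bounded) number of linear-compactor compactions shows that the linear compactor's total worst-case contribution matches, up to the constant factor $t = \O(1)$, the worst-case contribution of the top $t$ compactors in the non-GK KLL analysis, giving at most $(\eps/2) n$. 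The linear compactor itself adds only $\O(tk) = \O(k)$ cells of space, which does not affect the final space bound.

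The main obstacle I anticipate is the per-compaction comparison in (ii): because the piecewise-linear rank function $f_L$ from Equation~\ref{eq:rank} has different semantics than the piecewise-constant KLL rank, one must carefully account for the interpolation term when transferring the $\O(w_{\max})$ bound from the hypothetical ``every other point'' strategy to the $\ell_\infty$-optimal one. Removing a point from the linear compactor changes not only the rank contribution at that point but also the interpolated contribution on the surrounding interval, so the upper bound on the $\ell_\infty$ error of an arbitrary size-$(tk/2)$ compaction must be argued directly from the definition of $f_L$ rather than inherited from the KLL analysis.
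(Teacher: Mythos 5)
Your part (i) matches the paper's treatment: the bottom $H-t$ compactors are analyzed by inheriting the non-GK KLL bound (\Cref{thm:kll-analysis}) for the geometrically-sized levels plus a deterministic $sn/k$ bound for the remaining fixed-size levels, and the space accounting is identical. The gap is in part (ii), specifically in the claim that the feasible ``retain every other point'' compaction has $\ell_\infty$ error $\O(w_{\max})$ and that this ``matches, up to the constant factor $t$,'' the contribution of the top $t$ KLL compactors. In a KLL compactor at level $h$ every item has weight exactly $2^h$, so each compaction costs $w_h$; but in the \ourcompactor{} the weights accumulate across compactions (a retained point absorbs the weight of the entire run discarded before it), so $w_{\max}$ is not $\O(2^{H-t})$ and your argument does not control it. The only a priori bound is $w_{\max} \leq (2+c)2^{H-t}tk/2$, the conserved total weight after $c$ compactions, and summing that over the $M \approx 2n/(tk2^{H-t})$ compactions gives $\Theta(n)$, not $\O(\eps n)$. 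Worse, the every-other-point rule may repeatedly discard heavy points, so it does not by itself keep $w_{\max}$ even linear in $c$.

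The missing ingredient is a bound on the \emph{distribution} of weights inside the \ourcompactor{}: the paper's \Cref{lma:compactor-weight-bound} uses a counting (Markov-type) argument on the conserved total weight to show that after $c$ compactions at least half the endpoints have weight at most $(2c+3)2^{H-t}$, i.e., linear in $c$. The feasible compaction exhibited in \Cref{thm:single-linear-compaction} must then be chosen with this in mind---discard every fourth newly-arrived (light) point and every other point among the guaranteed-light half of the old points---so that every run of discarded points has total weight $\O(c\,2^{H-t})$; combined with \Cref{lma:displaced-weight,lma:discarded-endpoints} and the $\ell_\infty$-optimality of the dynamic program, this yields a per-compaction error of $(c+2)2^{H-t+1}$ that \emph{grows with the compaction index} $c$. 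The sum $\sum_{c\le M}(c+2)2^{H-t+1} = \O(M^2 2^{H-t}) = \O\bigl(n^2/(t^2k^22^{H-t})\bigr)$ is then shown to be $\O(\eps n)$ using $H \le \log(n/ck)+2$. Your outline omits both the weight-distribution lemma and the fact that the per-compaction error must be allowed to grow with $c$; without these, the deterministic bound in (ii) does not close.
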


Our technique analyzes the error introduced by each compactor, using two techniques.
To analyze the error of the KLL-style compactors of the \algorithmname{}, we prove that they introduce precisely the same error as they would in a non-GK KLL sketch run on the same stream.
We then apply the two-part analysis of the non-GK KLL sketch, analyzing the first $H - s$ compactors and the $(H - s)$th through $(H - s + t)$th compactors separately. To analyze the error of the \ourcompactor{} at the top, we analyze the error introduced per compaction.
We then analyze the number of compactions of the \ourcompactor{} and therefore the total error introduced by the \ourcompactor{}.

Consider a stream $X = x_1, x_2, \dots, x_n$.
Let $S(X)$ be a non-GK KLL sketch computed on this stream that terminates with $H$ compactors and let $S_b(X)$ be the $b$th compactor of $S(x)$.
Similarly, let $S'(X)$ be a \algorithmname{} computed on this stream with $H - t$ levels of KLL-style compactors and one \ourcompactor{} at level $H - t + 1$.
Let $S'_b(X)$ be the $b$th compactor $S'(X)$.

Following \cite{kll}, let $R(S, x, h)$ be the rank of item~$x$ among all points in compactors in the sketch~$S$ at heights at most $h' \leq h$ at the end of the stream.
For convenience, we set $R(x, 0)$ to be the true rank of $x$ in the input stream.
Let $\err(S, x, h) = R(S, x, h) - R(S, x, h - 1)$ be the total change in the approximate rank of $x$ due to the compactor at level $h$.
The total error decomposes into this error per compactor as $\sup_x |R(x, 0) - S'(x)| = \sum_{h = 1}^H \err(S', x, h)$.

\paragraph{Analyzing the KLL compactors.}
In both $S$ and $S'$, stream elements only move from lower compactors to higher ones, and the compactor at level~$b$ at any point while processing the stream is defined entirely by the compactors at \emph{lower} levels up to that point.
Therefore, for all $b < H - t$, $S'_b(X) = S(X)$.

In a KLL sketch, the lowest compactors all have a capacity of exactly~2.
As the authors note, a sequence of $H''$ compactors that all have capacity~2 is essentially a sampler: out of every $2^{H''}$ elements they select one uniformly and output it with weight $2^{H''}$.
This means that these compactors---in both KLL and \algorithmname{}---can be implemented in $O(1)$ space.
 
To handle the other KLL compactors, we use a theorem from \cite{kll} as a key lemma:

\begin{theorem}[Theorem 3 in \cite{kll}] \label{thm:kll-analysis}

Consider the non-GK KLL sketch $S(X)$ with height $H$, and where the compactor at level $h$ has capacity $k_h \geq k c^{H - h}$. Let $H''$ be the height at which the compactors have size greater than 2 (i.e., where the compactors do not just perform sampling). For any $H' > H''$, we have
\begin{align*}
\Pr\left[\sum_{h = 1}^H \err(S, x,h)> 2 \eps n\right] &\leq 2 \exp \left( -c \eps^2 k 2^{H - H''} / 32 \right)\\  &+ 2 \exp \left( -C \eps^2 k^2 2^{2(H - H')} \right).
\end{align*}
\end{theorem}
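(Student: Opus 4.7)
The plan is to prove \Cref{thm:kll-analysis} by applying Hoeffding's inequality separately to two ranges of compactor heights, reusing the Rademacher decomposition laid out in the excerpt. Specifically, at each height $h > H''$ (where the compactor has capacity greater than $2$), the $i$th compaction contributes a signed error $w_h X_{i,h}$ with independent $X_{i,h} \in \{-1,+1\}$ satisfying $\E[X_{i,h}] = 0$, so the total error is $\sum_h \err(S,x,h) = \sum_{h=H''+1}^{H} \sum_{i=1}^{m_h} w_h X_{i,h}$. Levels $h \leq H''$ act as pure random samplers and contribute no such signed error directly; conditioning on their outputs preserves the independence of the remaining $X_{i,h}$ so the Hoeffding argument proceeds on the conditional distribution.

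First, I would bound the number of compactions at each height. Counting total weight shows $m_h \leq n/(k_h 2^h)$, since each compaction at level $h$ sends $k_h/2$ items of weight $2^{h+1}$ upward and the total weight flowing through the sketch equals $n$. The per-level Hoeffding variance proxy is therefore $m_h w_h^2 \leq n \cdot 2^h/k_h$, and the capacity lower bound $k_h \geq k c^{H-h}$ gives $m_h w_h^2 \leq (n/k) \cdot 2^h c^{h - H}$, a geometric series in $(2/c)^h$ with $c = 2/3$.

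Second, I would split the non-sampling heights into a middle range $(H'', H']$ and a top range $(H', H]$, then apply the two-tailed Hoeffding bound quoted in the excerpt to each sum separately. In each range the variance sum $V = \sum_h m_h w_h^2$ is geometric and dominated by its largest term. Careful bookkeeping, combined with the standard relation $n \approx k \cdot 2^H$ implied by the top compactor's capacity, yields exponent bounds of order $\eps^2 k \cdot 2^{H - H''}$ and $\eps^2 k^2 \cdot 2^{2(H - H')}$, matching (up to the explicit constants $c$ and $C$ absorbed in the theorem) the two exponentials stated. A union bound over the two ranges, together with the observation that total error greater than $2 \eps n$ forces at least one of the two sub-errors to exceed $\eps n$, then completes the proof.

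The main obstacle I anticipate is the geometric bookkeeping: extracting the clean factors $2^{H - H''}$ and $2^{2(H - H')}$ from the sums $\sum_h 2^h/k_h$ requires identifying which endpoint dominates each geometric tail. With $c = 2/3$ the ratio $2c = 4/3 > 1$, so the largest height in each range dominates, and the split point $H'$ is chosen precisely so that the middle sum is controlled by its upper boundary $H'$ while the top sum is controlled by $H$ (where $k_H = k$). A secondary technicality is justifying the conditional-independence argument for the sampling layers at heights $h \leq H''$, which requires a filtration argument on the compaction events so that the Rademacher signs at higher levels remain mutually independent after conditioning on the sampler outputs.
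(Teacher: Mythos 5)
First, a framing point: the paper does not actually prove this statement. It is quoted as Theorem~3 of \cite{kll} and used as a black box, so the only internal material to compare against is the informal Hoeffding sketch in Section~2.2. Your skeleton---decompose the error as $\sum_h \sum_{i=1}^{m_h} w_h X_{i,h}$ with independent Rademacher signs, bound $m_h \le n/(k_h 2^h)$ by weight conservation, and apply Hoeffding to geometrically decaying variance sums---is indeed the KLL argument, and your filtration remark about conditioning on lower levels is the standard (and correct) way to justify independence.

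There is, however, a genuine structural gap in how you attribute the two exponential terms. You assert that the capacity-$2$ levels $h \le H''$ ``contribute no signed error directly'' and propose to derive the two terms from the ranges $(H'', H']$ and $(H', H]$. Both halves of this are off. The sampling levels do compact and do contribute $\pm w_h$ per compaction; with $m_h \le n/2^{h+1}$ their variance proxy is $\sum_{h\le H''} n2^{h-1} \le n2^{H''}$, and Hoeffding together with $n = \Omega(k2^{H})$ gives $2\exp\left(-\Omega(\eps^2 k 2^{H-H''})\right)$---precisely the first term of \Cref{thm:kll-analysis} (its linear dependence on $k$ is the signature of the sampler, not of a capacity-$kc^{H-h}$ compactor). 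The second term comes from the middle range $(H'', H']$ alone. Your plan to extract it from the top range $(H', H]$ cannot work: there the variance sum is dominated by $h = H$, where $k_H = k$, and Hoeffding yields only $\exp\left(-\Theta(\eps^2 k^2)\right)$ with no $2^{2(H-H')}$ gain. That factor arises precisely because the Hoeffding sum is \emph{truncated} at $H'$; the levels above $H'$ are excluded from this probabilistic bound and are handled deterministically (via the $sn/k$ argument) in the paper's proof of \Cref{thm:main}, which is also how the $\sum_{h=1}^{H}$ in the quoted statement must be read. A secondary issue: even for the middle range, the bookkeeping you describe (geometric sum dominated by $h = H'$, then $n \approx k2^H$) lands on an exponent proportional to $\eps^2 k^2 (2c)^{H-H'}$ with $2c = 4/3$, which is exponentially weaker in $H - H'$ than the stated $\eps^2 k^2\, 2^{2(H-H')}$; closing that discrepancy is not a matter of absorbing constants into $C$ and would need either a sharper bound on $m_h$ or the precise conventions of \cite{kll}.
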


\paragraph{Analyzing the \ourcompactor{}.}

As mentioned, we will analyze the error introduced by the \ourcompactor{} compaction-by-compaction.
Specifically, we analyze the \algorithmname{} between the end of one compaction and the end of the following compaction.
During this interval, a total of $d$ items of weight $2^{H - t}$ are added to the \ourcompactor{}, where either $d = tk$ if the \ourcompactor{} has never compacted or $d = tk/2$ if it has.

Let $f$ be the piecewise linear rank function of the full \ourcompactor{} right before the compaction with endpoint set~$Z$ comprising $z_1 < z_2 < \cdots < z_{tk}$ and weight function $w$.
Let $f'$ be the piecewise linear rank function of the \ourcompactor{} immediately after the compaction, with endpoint set~$Z' \subset Z$, weight function $w'$, and $|Z'| = |Z|/2$. 

The \ourcompactor{} compaction procedure removes some of the items in the \ourcompactor{}.
A \emph{run} is a sequence of removed elements that are adjacent in sorted order.
We show that the error introduced by a \ourcompactor{} is bounded by the greatest run of displaced weight.

\begin{lemma} \label{lma:displaced-weight}
Organize $Z \setminus Z'$ into continuous runs of adjacent removed elements, and let $F_i$ be the total weight of the $i$th run.
Then $\sup_{z \in Z} |f(z) - f'(z)| \leq \max_i F_i$.	
\end{lemma}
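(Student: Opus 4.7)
The plan is to reduce the lemma to bounding $|f(z_j) - f'(z_j)|$ at each removed point $z_j$ by the weight $F_i$ of the run containing $z_j$. The third compaction condition forces $f(z) = f'(z)$ at every retained $z \in Z'$, so those points contribute nothing to the supremum and it suffices to control the error on a single run at a time.

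Fix a run $\{z_{a+1}, \ldots, z_{b-1}\} \subseteq Z \setminus Z'$ bracketed by consecutive retained points $z_a, z_b \in Z'$. Since $z_a$ and $z_b$ are consecutive in $Z'$, applying Equation~\ref{eq:rank} to $f'$ on $(z_a, z_b]$ shows that $f'$ is a single linear piece there. Combining this with the rank-preservation identities $f'(z_a) = f(z_a)$ and $f'(z_b) = f(z_b)$ (which also force $w'(z_b) = f(z_b) - f(z_a)$) identifies $f'$ on $[z_a, z_b]$ with the chord of $f$:
\[
f'(z_j) = f(z_a) + \bigl(f(z_b) - f(z_a)\bigr) \cdot \frac{z_j - z_a}{z_b - z_a}.
\]
On the same interval, $f$ is piecewise linear with breakpoints at each removed $z_{a+k}$, so $f(z_j) - f(z_a) = \sum_{k=1}^{j-a} w(z_{a+k}) =: W_j$, which lies in $[0, F_i]$.

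The central step is then to argue that the gap between a nondecreasing piecewise-linear function and its chord, evaluated at any interior breakpoint, is bounded by the total run weight $F_i$. I would split on the sign of $f(z_j) - f'(z_j)$. In the positive case the bound is immediate: $f(z_j) - f'(z_j) \leq f(z_j) - f(z_a) = W_j \leq F_i$. In the negative case I would use the mirror identity $f(z_b) - f(z_j) = (f(z_b) - f(z_a)) - W_j$ together with the vanishing of $f - f'$ at $z_b$ to reduce to an analogous bound measured from the right endpoint. Taking the maximum over the two cases and over all runs then yields $\sup_{z \in Z} |f(z) - f'(z)| \leq \max_i F_i$.

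The hardest part will be the negative-error case: controlling $f'(z_j) - f(z_j)$ cleanly by $F_i$ (and not by the larger chord height $f(z_b) - f(z_a)$) requires a cancellation between the chord's extra contribution from $w(z_b)$ and the piecewise-linear deficit of $f$ on the right half of the run. I expect to close this by exploiting the symmetry of the chord with respect to the two bracketing retained points and re-running the positive-case argument from $z_b$ inward, so that the extra boundary-weight term gets absorbed rather than accumulated.
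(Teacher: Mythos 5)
Your setup is essentially the paper's: evaluate $f$ and $f'$ at the removed breakpoints of a single run $\{z_{a+1},\dots,z_{b-1}\}$ bracketed by retained points $z_a,z_b$, use rank preservation at retained points to identify $f'$ on $[z_a,z_b]$ with the chord of $f$ (equivalently $w'(z_b)=F_i+w(z_b)$ spread linearly over the interval), and bound the deviation of $f$ from its chord. Your positive case, $f(z_j)-f'(z_j)\le f(z_j)-f'(z_a)=W_j\le F_i$, is correct and is the half of the bound that genuinely works.

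The gap is the negative case, and the fix you sketch does not close it. The mirrored argument's first step is $f'(z_j)-f(z_j)\le f'(z_b)-f(z_j)=f(z_b)-f(z_j)=\sum_{i=j+1}^{b}w(z_i)=(F_i-W_j)+w(z_b)$; the boundary weight $w(z_b)$ sits inside $f(z_b)-f(z_j)$ itself because the sum runs up to the retained index $b$, so there is nothing for the chord's symmetry to cancel against. Worse, the claimed bound can actually fail for this argument's target quantity: take a run consisting of a single removed point $z_{a+1}$ of weight $\eps$ placed just to the left of a retained point $z_b$ of weight $W\gg\eps$; then $F_i=\eps$ but $f'(z_{a+1})-f(z_{a+1})=(\eps+W)\tfrac{z_{a+1}-z_a}{z_b-z_a}-\eps\approx W$. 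So no argument of this shape yields $\max_i F_i$; what does follow is the weaker bound $\max_i\bigl(F_i+w(z_{b_i})\bigr)$, where $z_{b_i}$ is the retained point immediately to the right of run $i$. You have in fact isolated exactly the step the paper's own proof elides --- its displayed chain bounds $\bigl|\sum_{i=a+1}^{j}w(z_i)-\sum_{i=a+1}^{b}w(z_i)\tfrac{z_j-z_a}{z_b-z_a}\bigr|$ by $\sum_{i=a+1}^{b-1}w(z_i)$ in one line, silently dropping the same $w(z_b)$ contribution. To make your write-up sound, either prove the weaker bound including the $w(z_{b_i})$ term (which is what the run-length accounting in \Cref{thm:single-linear-compaction} effectively consumes) or add a hypothesis controlling the weight of the retained point terminating each run.
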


\beginproofwithqed{}
Fix a run with endpoints $a$ and $b$ and let its total weight be $F = \sum_{i = a + 1}^{b - 1} w(z_i)$.
Consider any point $z_j$ in that run, so that $a < j < b$.
Its original rank was $f(z_j) = \sum_{i = 1}^j w(z_i)$ while its new rank is, by construction, $f'(z_j) = \sum_{i = 1}^a w(z_i) + \frac{F + w(z_b)}{z_b - z_a} (z_j - z_a)$.
Therefore,
\begin{align*}
|f(z_j) - f'(z_j)|
&= \left| \sum_{i = a + 1}^j w (z_i) - \frac{F + w(z_b)}{z_b - z_a} (z_j - z_a) \right| \\
&= \left| \sum_{i = a + 1}^j w (z_i) - \sum_{i = a + 1}^b w(z_i) \frac{z_j - z_a}{z_b - z_a}  \right| \\
&\leq \sum_{i = a + 1}^{b - 1} w (z_i) \\
&= F.
\myqedhere
  \end{align*}

Next, we show that the greater error introduced by a linear compaction step occurs \emph{at one of the discarded endpoints}:

\begin{lemma} \label{lma:discarded-endpoints}
There is some $z_i \in Z \setminus Z'$ such that $\sup_{x \in [z_1, z_{tk}]} |f(x) - f'(x)| = |f(z_i) - f'(z_i)|$.
\end{lemma}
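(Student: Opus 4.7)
The plan is to reduce this to a standard fact about piecewise linear functions: the supremum of $|f - f'|$ on $[z_1, z_{tk}]$ is attained at a breakpoint, and every non-discarded breakpoint contributes zero.

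First I would observe that $f$ is piecewise linear with breakpoints precisely at $Z$, and $f'$ is piecewise linear with breakpoints at $Z' \subset Z$. Since $Z' \subset Z$, we can view $f'$ as piecewise linear on the finer partition induced by $Z$ as well, so the difference $g(x) := f(x) - f'(x)$ is piecewise linear on $[z_1, z_{tk}]$ with breakpoints contained in $Z$. Consequently, $|g|$ is convex on each subinterval $[z_i, z_{i+1}]$ (in fact the absolute value of a linear function), so its supremum on that subinterval is attained at one of the two endpoints. Taking a maximum over all consecutive pairs shows that $\sup_{x \in [z_1, z_{tk}]} |g(x)|$ is attained at some $z_i \in Z$.

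Next I would invoke the third property of the compaction procedure stated in \Cref{sec:algorithm}: for every $x \in Z'$, we have $f(x) = f'(x)$. This means $g(z_i) = 0$ for each $z_i \in Z'$, so these breakpoints cannot attain the supremum unless the supremum is already zero. Hence the supremum is attained at some $z_i \in Z \setminus Z'$; if the supremum is zero, any element of $Z \setminus Z'$ (which is nonempty since $|Z'| = |Z|/2$) serves as the claimed $z_i$ trivially.

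The main subtlety, rather than any real obstacle, is making sure we legitimately import the invariant $f(x) = f'(x)$ on $Z'$ from the compaction specification and that $f'$ really can be treated as piecewise linear on the refined partition $Z$ (this is immediate because linear functions remain linear after subdividing an interval). Once those two observations are made, the lemma follows from the elementary fact that an absolute value of a piecewise linear function on a compact interval achieves its maximum at a breakpoint.
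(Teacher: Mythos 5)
Your proof is correct and follows essentially the same route as the paper's: the paper likewise shows that $f - f'$ is affine on each subinterval of the partition induced by $Z$ (by writing out the interpolation formulas explicitly rather than citing the general piecewise-linearity fact), so the extremum is attained at a breakpoint, and then uses the invariant $f = f'$ on $Z'$ to rule out retained endpoints. Your phrasing is a cleaner, more abstract packaging of the identical argument.
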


\begin{proof}
Consider any point $x \in [z_1, z_d]$.
If $x$ is one of the endpoints retained after compaction $z_j \in Z'$, then by construction $f(x) = \sum_{i \leq j} w(z_i) = f'(x)$.
Our claim does not depend on the error if $x$ is one of the endpoints in $Z \setminus Z'$.

Suppose then that $x$ is not in the original endpoint set~$Z$.
Let $z_a$ and $z_b$ be the left and right neighbours of $x$ in $Z$.
By the definition of the \ourcompactor{},
\begin{align*}
f(z_a) &= \sum_{i = 1}^a w(z_i), \\
f(x) &= \sum_{i = 1}^a w(z_i) + \frac{w(z_b)}{z_b - z_a} (x - z_a),\\
f(z_b) &= \sum_{i = 1}^b w(z_b).
\end{align*}
Let $z_{a'}$ and $z_{b'}$ be the left and right neighbours of $x$ in $Z'$.
By definition, the weight $W := w'(z_{b'}) = \sum_{i = a' + 1}^{b'} w(z_i)$ and so we have
\[
f'(x) = \sum_{i = 1}^{a'} w(z_i) + \frac{W}{z_{b'} - z_{a'}} (x - z_{a'}).
\]
Therefore,
\begin{align*}
f(x) - f(x') = &\sum_{i = a' + 1}^{a} w(z_i) + \\
&\left( \frac{w(z_b)}{z_b - z_a} - \frac{W}{z_{b'} - z_{a'}}\right) (x - z_{a'}).
\end{align*}

Observe that this expression obtains its extremum on the interval~$[z_a, z_b] \subset [z_{a'}, z_{b'}]$ at either $z_a$ or $z_b$, depending on the sign of $D = \frac{w(z_b)}{z_b - z_a} - \frac{W}{z_{b'} - z_{a'}}$. In either case, $|f(x) - f(x')|$ achieve its maximum at one of the endpoints~$z_a$ or $z_b$, completing the proof.
\end{proof}

We use a simple counting argument to bound the size of the majority of the weights in a \ourcompactor{}:

\begin{lemma} \label{lma:compactor-weight-bound}
Consider a \ourcompactor{} that has just completed its $c$th compaction.
At least half of the endpoints~$Z$ in the \ourcompactor{} have weight at most $(2c + 3) 2^{H - t}$.
\end{lemma}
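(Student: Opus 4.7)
The plan is to argue by a straightforward pigeonhole over the total weight carried by the linear compactor. The two key observations I would use are: (i) the compaction step, by its stated invariants, conserves the total weight in the linear compactor, and (ii) the addition procedure, being equivalent to summing the two rank functions and taking finite differences, adds exactly the weight of the incoming batch to the total. Thus the total weight in the compactor after any number of compactions is completely determined by how many items of weight $2^{H-t}$ have been fed into it from the last KLL-style compactor.

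I would then carefully count those additions. Before the first compaction, a full batch of $tk$ items arrives; thereafter, each subsequent compaction is preceded by exactly one refill of $tk/2$ items, since the compactor must rise from size $tk/2$ back to capacity $tk$ before it compacts again. Therefore, immediately after the $c$th compaction, the total weight in the compactor is
\begin{equation*}
W_c \;=\; \Bigl(tk + (c-1)\cdot \tfrac{tk}{2}\Bigr)\,2^{H-t} \;=\; \tfrac{(c+1)\,tk}{2}\cdot 2^{H-t},
\end{equation*}
and the number of endpoints is exactly $tk/2$. Dividing gives an average endpoint weight of $(c+1)\,2^{H-t}$.

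The final step is a pigeonhole argument: if strictly more than half of the $tk/2$ endpoints had weight greater than $2(c+1)\,2^{H-t}$, then the total weight would strictly exceed $W_c$, a contradiction. Hence at least half of the endpoints have weight at most $2(c+1)\,2^{H-t} \le (2c+3)\,2^{H-t}$, which is the claimed bound (with a little slack that presumably makes later arithmetic cleaner).

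I do not expect a real obstacle here; the only subtle point is justifying conservation of total weight under the linear compactor's addition and compaction procedures, which I would verify by appealing to the rank-function interpretation in Equation~\ref{eq:rank}: evaluating the summed rank function at $+\infty$ shows that the post-merge total equals the sum of pre-merge totals, and the compaction requirement $f_C(x)=f_{C'}(x)$ at the retained endpoints forces equality of totals there as well. Everything else is bookkeeping.
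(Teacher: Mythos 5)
Your proof is correct and follows essentially the same route as the paper's: conservation of total weight under merging and compaction, a count of how many weight-$2^{H-t}$ items have entered the \ourcompactor{}, and a pigeonhole/averaging step. The only immaterial differences are that you count $(c+1)tk/2$ arrivals where the paper counts $(c+2)tk/2$ (the paper's tally matches the full compactor just before the $(c+1)$th compaction rather than the state right after the $c$th), and your pigeonhole uses only nonnegativity of the weights, giving the marginally stronger bound $(2c+2)2^{H-t} \le (2c+3)2^{H-t}$.
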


\begin{proof}
Every point enters the \ourcompactor{} with weight $2^{H - t}$.
After $c$ compactions, a total of $(2 + c)tk/2$ such points have entered the compactor.
A compaction operation conserves the total weight of points so the total weight of the compactor is $(2 + c) 2^{H - t} tk/2$.

Suppose that more than half of the $tk$ points currently in the compactor have  weight at most $T$.
These points have a total weight greater than $T tk/4$ while the remaining point s each have weight at least $2^{H - t}$ and so have total weight at least $2^{H - t} tk/4$.
The total weight is therefore $(T + 2^{H - t}) tk / 4$.
This weight must not exceed the total conserved weight $(2 + c) 2^{H - t} tk/2$, and so we have
\[
\frac{(T + 2^{H - t}) tk}{4} \leq \frac{(2 + c) 2^{H - t} tk}{2}.
\]
Rearranging, we obtain that our result holds for any $T \leq (2c + 3) 2^{H - t}$.
\end{proof}

Combining these lemmas, we obtain a bound on the error introduced during a single compaction step.

\begin{theorem} \label{thm:single-linear-compaction}
Suppose that the compaction being studied is the $(c + 1)$th compaction. 
The error introduced during this compaction step is $\sup_x |f(x) - f'(x)| \leq  (c + 2) 2^{H - t + 1}$.
\end{theorem}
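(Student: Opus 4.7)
My plan is to combine the three preceding lemmas with a short existence construction. First I would apply Lemma~\ref{lma:discarded-endpoints} to reduce $\sup_x |f(x)-f'(x)|$ to the error at some discarded endpoint $z_j \in Z \setminus Z'$, and then use Lemma~\ref{lma:displaced-weight} to bound that error by the weight $F_i$ of the run containing $z_j$. Hence $\sup_x |f(x)-f'(x)| \leq \max_i F_i$ for the $Z'$ actually produced by compaction. Because the \ourcompactor{} compaction chooses $Z'$ to minimize the $\ell_\infty$ loss and that loss is upper-bounded by $\max_i F_i$ by the above reduction, it suffices to exhibit \emph{any} valid $Z' \subseteq Z$ with $|Z'| = tk/2$ such that $\max_i F_i \leq T^* := (c+2)\,2^{H-t+1}$.

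I would construct such a $Z'$ greedily. Force $z_1$ and $z_{tk}$ into $Z'$ so that every run has kept endpoints on both sides (making Lemma~\ref{lma:displaced-weight} apply verbatim and respecting the weight-conservation constraint of the compaction), and then traverse the remaining elements left to right while maintaining a running sum $s$ of discarded weights; whenever adding the next element would drive $s$ past $T^*$, promote the current element to $Z'$ and reset $s$ to $0$. By the weight-counting step in the proof of Lemma~\ref{lma:compactor-weight-bound}, the total weight in the compactor immediately before the $(c+1)$th compaction is $W = (c+2)\,2^{H-t}\,tk/2$. Each greedy promotion corresponds to $s$ crossing the threshold $T^*$, so at most $W/T^* = tk/4$ elements are promoted and $|Z'| \leq tk/4 + 2 \leq tk/2$. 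If strict inequality holds, pad $Z'$ with additional kept elements drawn from within existing runs---such padding only subdivides runs and cannot increase $\max_i F_i$. Every resulting run has weight at most $T^*$ by construction, and the theorem follows.

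The main obstacle, in my view, is the simultaneous constraint that $|Z'|$ equals exactly $tk/2$ \emph{and} every run has weight at most $T^*$: the budget $tk/2$ has to accommodate the forced boundary points, the greedy promotions, and any element whose individual weight already exceeds $T^*$ (which must be kept). The counting $W/T^* = tk/4$ is what makes the budget work, and Lemma~\ref{lma:compactor-weight-bound} plays the complementary role of guaranteeing that such over-threshold elements are few enough to fit.
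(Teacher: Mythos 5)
Your proof is correct and shares the paper's overall skeleton---reduce the $\ell_\infty$ error to the maximum run weight via \Cref{lma:discarded-endpoints} and \Cref{lma:displaced-weight}, exhibit one feasible compaction achieving the bound, and invoke the optimality of the actual compaction---but the witness you construct is genuinely different. The paper builds a rigid pattern: keep every fourth newly arrived point and every other member of the $tk/4$ low-weight surviving points guaranteed by \Cref{lma:compactor-weight-bound}, so that every run consists of at most three new points and one old low-weight point. Your greedy threshold sweep instead uses only the total-weight accounting $W = (c+2)2^{H-t}tk/2$ (the first step of the proof of \Cref{lma:compactor-weight-bound}) and needs no structural distinction between old and new points: each promotion consumes more than $T^* = (c+2)2^{H-t+1}$ of weight, so at most $W/T^* = tk/4$ promotions occur and the budget of $tk/2$ retained points is met after padding. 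This buys you uniform handling of arbitrary weight profiles (elements heavier than $T^*$ are simply promoted and absorbed by the same counting), and it actually delivers the stated constant: the paper's worst run of three weight-$2^{H-t}$ points plus one point of weight at most $(2c+3)2^{H-t}$ sums to $(c+3)2^{H-t+1}$, slightly exceeding the claimed $(c+2)2^{H-t+1}$, whereas your bound holds with the stated constant by construction. The only loose ends are trivial: the budget check $tk/4 + 2 \leq tk/2$ requires $tk \geq 8$, and forcing $z_1, z_{tk} \in Z'$ (as you do) is indeed needed so that every run is flanked by retained points and \Cref{lma:displaced-weight} applies verbatim.
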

\beginproofwithqed
We construct a particular post-compaction distribution of weights as follows.
Let $f''$ be the rank function for that post-compaction state.
During this interval, there were $tk/2$ points with weight $2^{H - t}$ that we added to the \ourcompactor{} for the first time.

In addition, there were $tk/2$ points remaining from a previous linear compaction.
We sort the $tk/2$ new points and keep every fourth point, discarding the rest and reallocting their weight to the next highest retained point (of either type).
By \Cref{lma:compactor-weight-bound}, there exists at least $tk/4$ of the existing points in the \ourcompactor{} with weight at most $2^{H - t + c}$.
We sort these points and discard every other point.
In total, we discard the required $tk/2$ points.

Observe that the longest possible run in this compaction consists of one of the existing points and three (out of a sequence of four) of the new points that were discarded.
By \Cref{lma:displaced-weight}, the error introduced on any of the original endpoints by this compaction is bounded by the sum of the weights of the points in the run: in this case, that sum is $2^{H - t} + 3 \cdot (2c + 3) 2^{H - t} \leq (c + 2) 2^{H - t + 1}$.
By \Cref{lma:discarded-endpoints}, we find that the error introduced by $f''$ is $\sup_x |f(x) - f''(x)| \leq (c + 2) 2^{H - t + 1}$.

We have exhibited a particular feasible solution to the optimization problem in the linear compaction.
Our actual algorithm finds, among all such feasible solutions, the one that minimizes this error function; it follows that
\begin{align*}
\sup_x |f(x) - f'(x)| &\leq
\sup_x |f(x) - f''(x)| \\ &\leq (c + 2) 2^{H - t + 1}.
\myqedhere
\end{align*}

\paragraph{Combining KLL and linear compactors.}

Lastly, we combine our analysis of the KLL and \ourcompactor{} to obtain an overall error bound and prove \Cref{thm:main}.
Our analysis closely follows the form of the proof of Theorem 4 in \cite{kll}.

\begin{proof}[Proof of \Cref{thm:main}]
First, we analyze the compactors with height at most $H - t$, including the sampling compactors.
These are all KLL-style compactors; by \Cref{thm:kll-analysis} these compactors will contribute error at most $\eps n$ with probability $1 - \delta$ so long as $\eps k 2^s \geq c' \sqrt{\log (2/\delta)}$ for a sufficiently small $c'$.
Second, we analyze the top $s - t$ compactors. 
The error introduced by these compactors is bounded by the error of the equivalent non-GK KLL sketch where we have a full $s$ equal-size compactors at the top.
This error is in turn bounded by $\sum_{h = H - s + 1}^h m_h w_h = \sum_{h=H - s + 1}^H n/k = sn/k$, where $m_h$ is the number of times that the KLL compactor at level $h$ is compacted and $w_h = 2^h$ is the weight associated with that compactor; this is at most $\eps n$ so long as $s \leq k \eps$.
Taking  $k = O((1/\eps) \log \log (1/\delta))$ and $s = O(\log \log (1/\delta))$ as in KLL, we satisfy both of these conditions.

\begin{figure*}[t]
\includegraphics[width=0.9\textwidth]{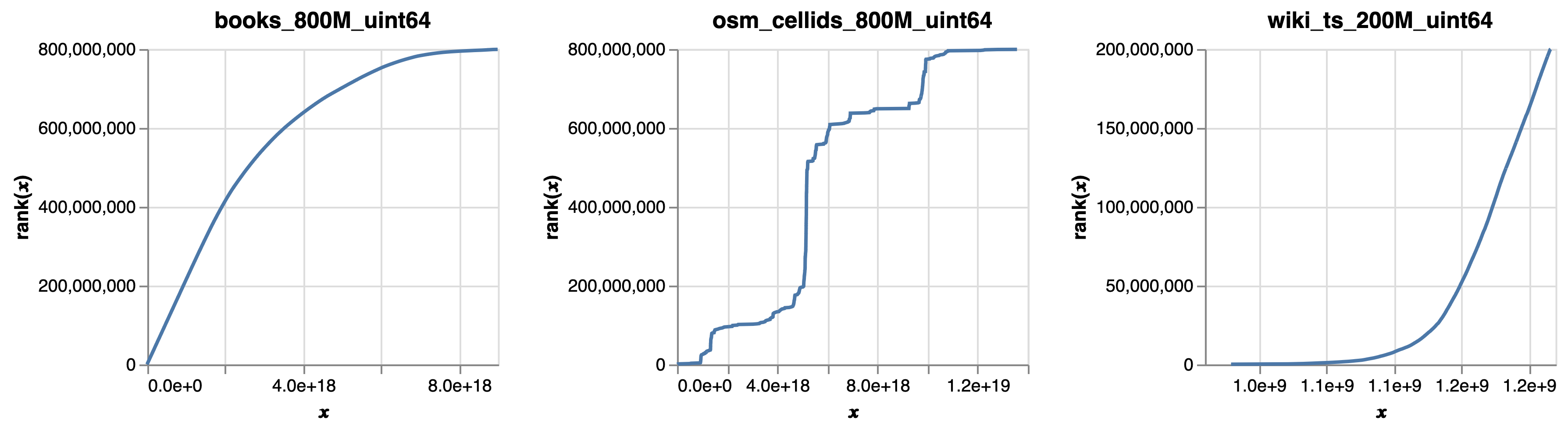}
\caption{The rank functions for the three SOSD data sets used in our experiments. The three data sets have rank functions with distinctive shapes, allowing us to compare the algorithms in a variety of settings.
}
\label{fig:sosd_cdf}	
\end{figure*}

Lastly, we analyze the single \ourcompactor{} with size~$tk$ that replaces the top $t < s$ KLL compactors.
Let $M$ be the number of compactions of the \ourcompactor{}.
Observe that since between each compaction of the \ourcompactor{} we add $tk/2$ entries, each with weight $2^{H - t}$ to the compactor, and so $M \leq 2n/(tk 2^{H - t})$.
Applying \Cref{thm:single-linear-compaction}, and summing the error introduced per compaction, the total error is
\begin{align*}
\sum_{c = 1}^M (c + 2) 2^{H - t + 1}
&= M 2^{H - t} + 2^{H - t} M(M + 1) \\
&\leq 2^{H - t + 1} M^2 \\
&\leq \frac{8 n^2}{t^2 k^2 2^{H - t}}.
\end{align*}

Our compactors are sized at each level in the same way as a non-GK KLL-sketch.
As in the KLL analysis, we have $H \leq \log(n/ck) + 2$ for a constant $0 < c < 1$.
Therefore, our error is bounded by
\[
\frac{8 n^2}{t^2 k^2 2^{\log(n/ck) + 2 - t}} \leq 
\frac{8 ck n^2}{t^2 k^2 n 2^{2 - t}} =
\frac{c n 2^{t + 1}}{t^2 k}.
\]For constant~$t$ and any $k = O((1/\eps) \log \log(1/\delta))$ as in KLL, this is at most $\eps n$.
Therefore, the total error of the sketch is $O(\eps n)$ as required.

Each part of the sketch contributes some space.
The KLL compactors increase geometrically in size, so the space used by the KLL portion of the sketch is dominated by the top $s - t$ compactors and uses $O(sk) = O((1/\eps) \log^2 \log (1/\delta))$ space.
The \algorithmname{} uses twice as much space per element as a KLL compactor, for a total of $O(tk) = O(k)$ space, so the total space usage is $O((1/\eps) \log^2 \log (1/\delta))$.
\end{proof}

\mysection{Experiments}

We wrote a performant implementation of our algorithm and evaluated its empirical error over a wide range of space parameters~$k$ and several \ourcompactor{} heights~$t$.
Our experiments were conducted on the recent SOSD benchmarking suite \cite{kipf, marcus} for learned index structures.
Each SOSD benchmark consists of a large number (generally 200 to 800 million) of 64-bit unsigned integer values.
Of particular interest were the \texttt{books}, \texttt{osm\_cellid}, and \texttt{wiki\_ts} data sets, since the rank functions of these three data sets have distinctly different shapes, as shown in \Cref{fig:sosd_cdf}.

\paragraph{Parameterization.}
The algorithm is parameterized by the KLL space parameter~$k$, which determines the size of the largest compactors and the \ourcompactor{}, and $t$, the number of KLL compactors that are replaced by the \ourcompactor{}.
Our worst-case bound holds for any constant~$t$ but this bound is exponential in $t$.
In practice, we experimented with a variety of small but non-zero values ($t = 1, 2, 3$).
We see $t$ as a parameter that is tunable based on the desired empirical performance and desired worst-case guarantees and expect that it will be selected appropriately on an application-by-application basis.

\paragraph{Implementation details.}
We implemented our algorithms in C++ with Python bindings for experiment management and data analysis.
Our implementation is reasonably performant: in informal experiments, it achieves a throughput that is only about three times less than that of highly-optimized,  production-quality KLL implementations.
This performant implementation allowed us to work with the entirety of the SOSD data sets; in our preliminary work, we found that many promising algorithms would only show improvements over KLL on moderately-sized data sets of less than a million points.
Our implementation supports any integer~$t \geq 0$: when $t = 0$, our implementation is identical to the commonly implement variant of KLL without the Greenwald-Khanna sketch.

\paragraph{Baselines.} Our algorithm is most naturally compared to KLL since the KLL sketch can be seen as an instance of the \algorithmname{} with no \ourcompactor{}.
We ran our experiments on our implementation of (non-GK) KLL (by setting $t = 0$) and validated those results with an open-source implementation from Facebook's Folly library \cite{folly}.
Like most implemented version of the KLL sketch, neither of these include the final Greenwald-Khanna sketch that is required to achieve space-optimality.

In addition to the non-GK KLL sketch, which offers worst-case guarantees, we ran experiments on the t-digest \cite{dunning}, which is commonly used in practice but is known to have arbitrarily bad worst-case performance \cite{cormode}.
We used the C++ implementation of t-digest in the \texttt{digestible} library \cite{digestible}.

\paragraph{Stream order.}
We found that many streaming quantile approximation algorithms without worst-case guarantees achieve very low error compared to the KLL sketch if they are given an input stream in a particular order but high error on other input orders.
For example, Figure~\ref{fig:comparison} shows that, even for a fixed set of inputs with a smooth rank function (\texttt{books}), there exists an adversarial order that makes the t-digest approximation have high error.
This observation might be of independent interest.

We evaluated the \algorithmname{} and the baselines on a variety of input orders for each data set:
\begin{itemize}
\item Random: the data are shuffled with a fixed seed.
\item Sorted: the data are presented in a sorted order.
\item First half sorted, second half reverse-sorted: the first half of the stream has the first half of the sorted data, in that order. The second half of the stream has the second half of the sorted data presented in \emph{reverse-sorted order}.
\item Flip flop: the stream has the smallest element, then the largest element, then the second-smallest element, then the second-largest element, and so on. This is the adversarial order from Figure~\ref{fig:comparison}.
\end{itemize}

\mysubsection{Experiment results and discussion.}

Our primary tool for insight into our experiments is the space-error tradeoff curve that shows how the total  space needed for the sketch compares to the empirical error between the exact rank function and the approximation defined by the sketch.
We obtain these curves for three different data sets from SOSD, four different sort orders, and  four different algorithms; these curves are shown in \Cref{fig:space-error-tradeoff}.
We use average L1 error, defined for a data set $X$ as $\sum_{x \in X} |f(x) - f'(x)|$.
Qualitatively, the \algorithmname{} is never significantly worse than KLL, even on our adversarial input orders like flip flop, and is often competitive with---or even better than---t-digest.
The differences are most pronounced on the \texttt{books} dataset, which has a smooth CDF that is extremely well-approximated by the \ourcompactor{} sketch's piecewise linear representation.

For a more quantitative understanding of the performance of the \algorithmname{} compared to KLL and t-digest, we produced diagrams of the ``possible error ratio hulls'', shown in Figure~\ref{fig:error-ratios}.
To obtain such a hull, we first determined the upper and lower frontiers for the data points (in Figure~\ref{fig:space-error-tradeoff}) for each algorithm.
These frontiers form an ``envelope'' or hull that encompasses all of the points for each dataset: an example of such a hull is shown in Figure~\ref{fig:hull}.
We then interpolate the envelope to obtain smooth curves (as in Figure~\ref{fig:hull}) and compute the ratios with respect to a another algorithm's envelope (between the upper/lower and lower/upper pairs), producing a hull that shows the range of behaviour between the ``worst case'' and ``best case'' performance of  the two algorithms.
We see that the \algorithmname{} achieves an error that is between $3 \times$ worse and $10\times$ better than KLL and between $10\times$ worse and $20\times$ better than t-digest.

\begin{figure}[t]
\includegraphics[width=0.95\columnwidth]{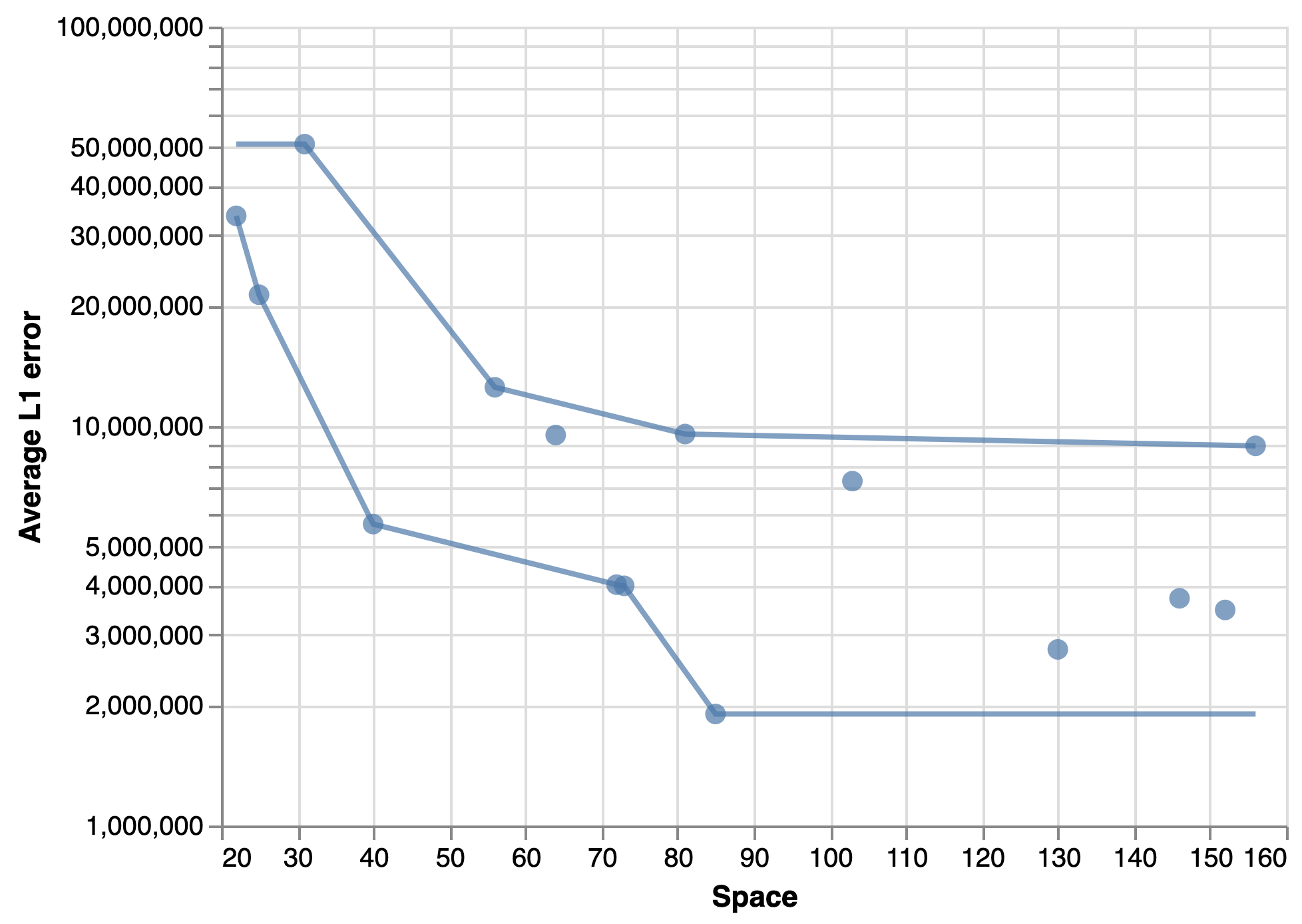}
\caption{An example of the ``envelope'' or hull of the space-error data points for the \algorithmname{} with $t = 2$ (\texttt{books} dataset, random order). It shows the range of errors that a user can expect from the \algorithmname{} given a certain amount of space.}
\label{fig:hull}	
\end{figure}

\begin{figure*}[t]
\includegraphics[width=\textwidth]{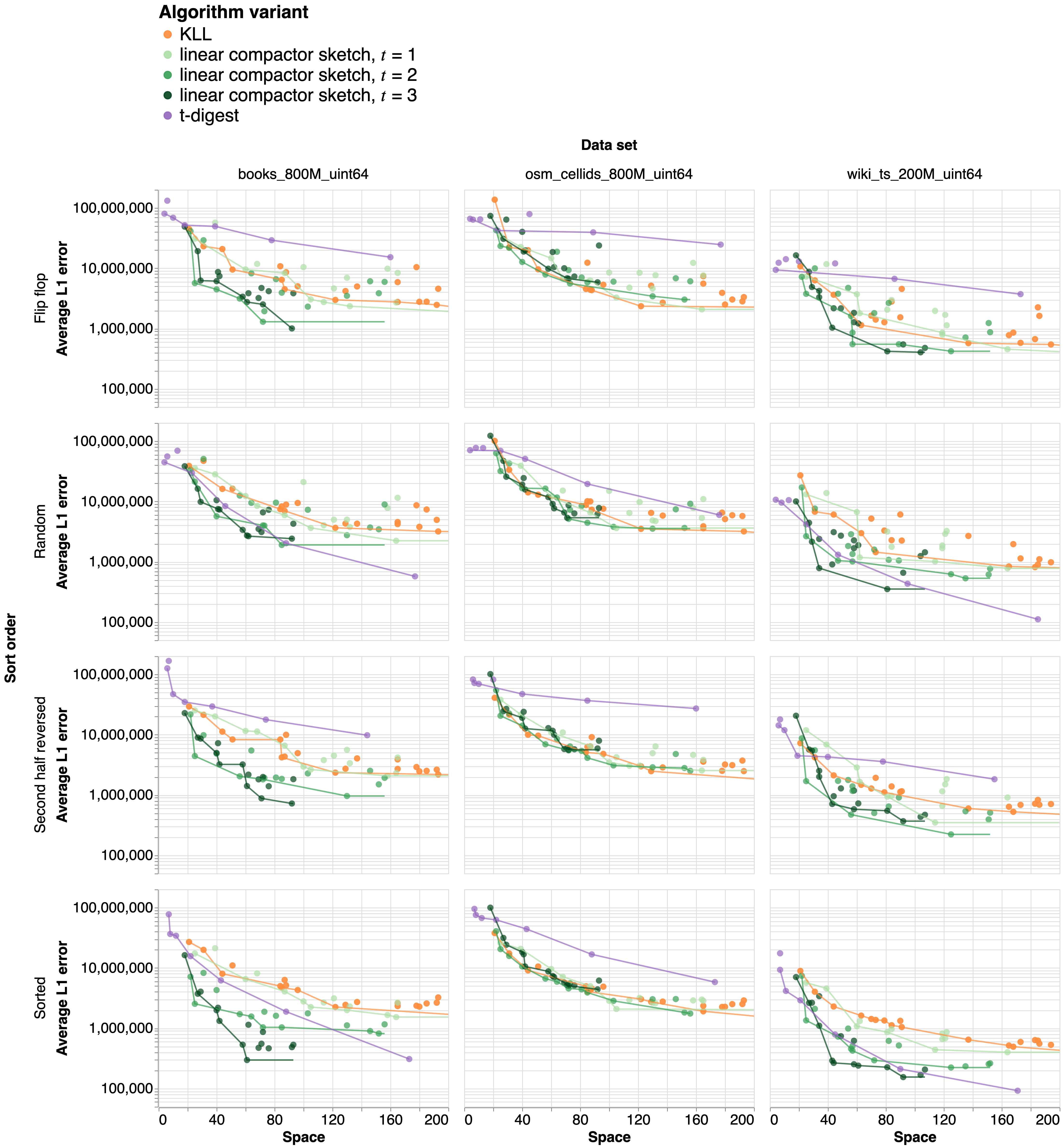}
\caption{Space-error tradeoff curves for the baselines and \algorithmname{} on three different data sets from SOSD and four different sort order, described above.
Markers indicate individual sketches, while curves indicate the lower frontier of possibilities observed (that is, the lower envelope described above) to highlight the general capabilities of each algorithm. 
A better algorithm has a curve that is further down and to the left, indicating lower error at a given amount of space.
}
\label{fig:space-error-tradeoff}
\end{figure*}

\begin{figure*}[t]
\includegraphics[width=\textwidth]{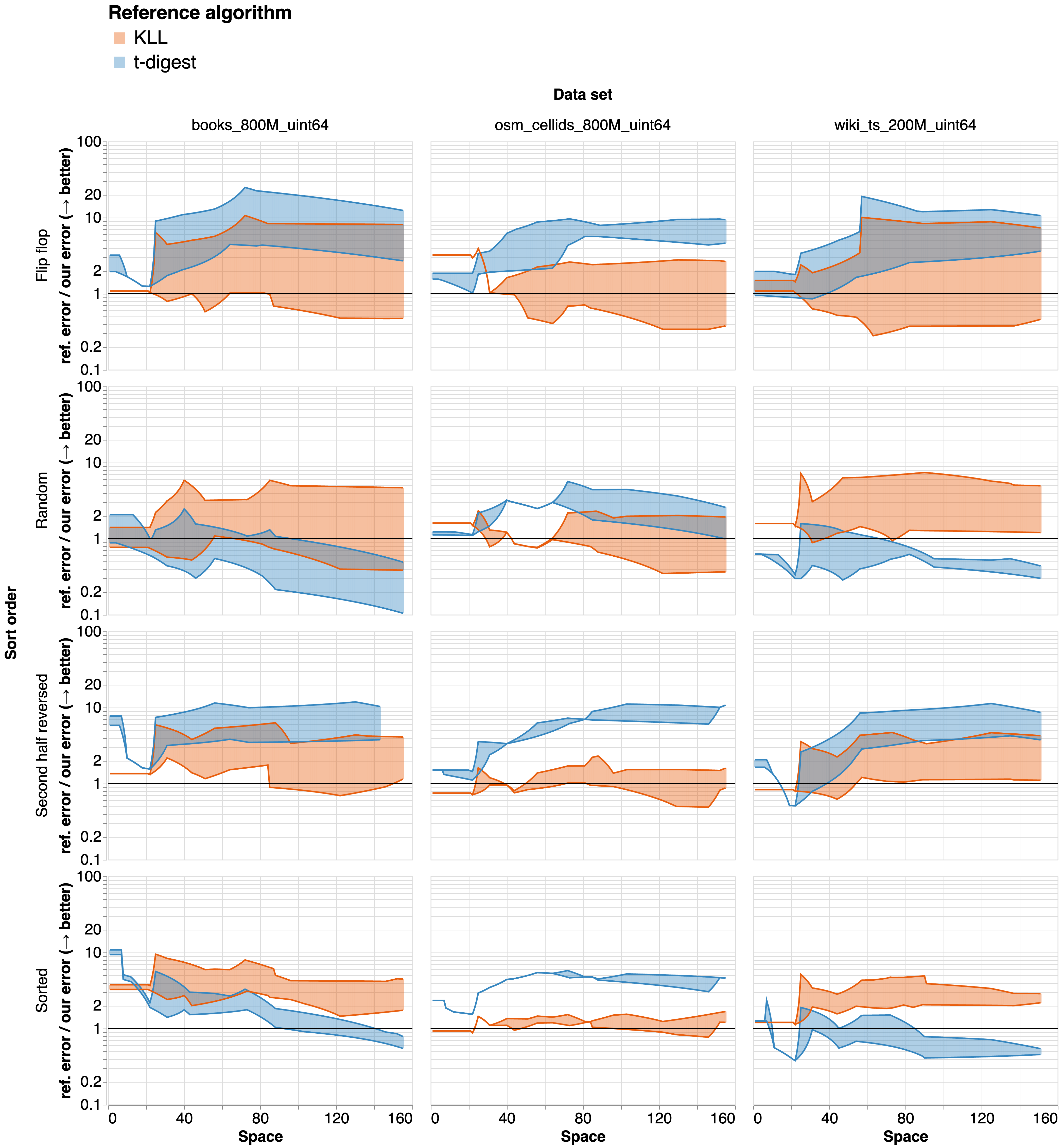}
\caption{Hulls representing the possible ratios of the error achieved by a reference algorithm (either KLL or t-digest) to the error achieved by the \algorithmname{} with $t = 2$. The filled-in area represents the area ratios consistent with the experiments in Figure~\ref{fig:space-error-tradeoff}.
We see that the \algorithmname{} always achieve error no worse than $3\times$ that of KLL, while often achieving and error that is competitive with---and sometimes much lower than---that achieved by the t-digest.
}
\label{fig:error-ratios}	
\end{figure*}

\section*{Acknowledgements}
Justin Y. Chen was supported by a MathWorks Engineering Fellowship, a GIST-MIT Research Collaboration grant, and NSF award CCF-2006798. Justin Y. Chen, Shyam Narayanan, and Sandeep Silwal were supported by NSF Graduate Research Fellowships under Grant No. 1745302. Nicholas Schiefer, Justin Y. Chen, Piotr Indyk,  Shyam Narayanan, and Sandeep Silwal were supported by a Simons Investigator Award. Piotr Indyk was supported by the NSF TRIPODS program (award DMS-2022448).

We thank Sylvia H\"{u}rlimann, Jessica Balik, and the anonymous reviewers for their helpful suggestions.
\FloatBarrier

\bibliography{main}
\bibliographystyle{plain}

\end{document}